\newtheorem{proposition}{\textbf{Proposition}}
\begin{document}
\title{On Optimizing Hierarchical Modulation in AWGN channel}
\author{Baicen Xiao,  Kexin Xiao, Zhiyong Chen, Hui Liu\\
Department of Electronic Engineering, Shanghai Jiao Tong University, Shanghai, P. R. China\\
Email: {\{xinzhiniepan, kexin.xiao, zhiyongchen, huiliu\}@sjtu.edu.cn}}
\maketitle
\begin{abstract}
Hierarchical modulation (HM) is able to provide different levels of protection for data streams and achieve a rate region that cannot be realized by traditional orthogonal schemes, such as time division (TD). Nevertheless, criterions and algorithms for general HM design are not available in existing literatures. In this paper, we jointly optimize the constellation positions and binary labels for HM to be used in additive white gaussian noise (AWGN) channel. Based on bit-interleaved coded modulation (BICM) with successive interference cancellation (SIC) capacity, our main purpose is to maximize the rate of one data stream, with power constrains and the constrain that the rate of other data streams should be larger than given thresholds. Multi-start interior-point algorithm is used to carry out the constellation optimization problems and methods to reduce optimization complexity are also proposed in this paper. Numerical results verify the performance gains of optimized HM compared with optimized quadrature amplidude modulation (QAM) based HM and other orthogonal transmission methods.
\end{abstract}
\section{INTRODUCTION}
In order to meet the requirement of high spectrum efficiency in 5G wireless communication system and the next generation of broadcasting system, there are tremendous interests in hierarchical modulation (HM). HM is a promising modulation scheme  to carry multiple data streams efficiently through one hierarchical symbol stream, which is significantly useful for many multi-data systems, such as broadcast systems \cite{upgrade,MBMS} and some other multi-receiver systems \cite{Relay,power-allocation}. From the perspective of information theory, HM can be viewed as a kind of superposition coding, which can improve the throughput of the whole system compared with time division multiplexing method, especially when the receiving signal-to-noise ratio (SNR) for every user is quite different \cite{broadcast,IT,wireless}. Moreover, HM is frequently introduced to provide unequal protection by taking advantage of the property that different bit positions of the labels for HM (different positions can be considered as different sub-channels) are able to possess different reliabilities, i.e. different error probabilities \cite{Hier-ID, Image}.

Characterized by many attractive properties, HM has been extensively studied. HM has been adopted by several standards, such as DVB-SH/S2 \cite{ETSI1,ETSI2} in order to provide different level of protection. In \cite{upgrade}, Jiang points out that HM can provide backward compatibility, which is a difficulty in increasing the bit rate of a fixed rate digital broadcast system, so that both old receivers and new receivers can work properly. \cite{voice} exploits the layered property of HM to transmit multimedia data according to channel conditions in multiclass data transmission where there are usually several kinds of data with different priorities. In \cite{Relay}, the authors describe how to use HM at the source in relay communication to achieve performance gain. Although HM has different data layers, some data layers are invalid if the corresponding sub-channel condition is bad, and hence the throughput is low. To tackle this problem and to provide higher throughput of the whole system, \cite{Adaptive} combines time sharing and HM and shows that a performance gain of roughly $15\%$ can be achieved by grouping users properly. In order to evaluate the bit error rate (BER), \cite{ber} and \cite{Hier-ID} carry out the approximate BER calculation method for uncoded HM and bit interleaved coded modulation (BICM) with iterative decoding (ID) HM respectively. Especially, \cite{Hier-ID} investigates different labelling rules for HM to identify which can offer good performance when ID is adopted.

Even though there are a large number of works related to HM, most of these works only consider uniform HM, e.g. HM based on adjusting uniform quadrature amplidude modulation (QAM). And to optimize HM, existing works mainly focus on adjusting labelling method based on H-QAM like \cite{Hier-ID}. However, QAM cannot take full advantage of shaping gain which can be exploited to improve spectrum efficiency \cite{shaping,Non-uniform}. Moreover, QAM usually has a large peak-to-average power ratio (PAPR) which may deteriorate performance when nonlinear power amplifier is adopted. And even for QAM-based HM (hereafter this is called H-QAM), there lacks satisfying criterions and algorithms to optimize layer priority factor. In \cite{power-allocation} Choi points out the problem of optimizing priority factor (also called power allocation factor) of H-QAM is nontrivial but does not carry out a more general algorithm for the optimization. Although the constellation set for HM proposed by Meric et al. is based on APSK and can improve throughput, the constellation set is still based on fixed number of rings which will limit the freedom, and no method is provided in their work to design general constellation set for HM. Some scholars \cite{Non-uniform, phase} deal with the constellation optimization problem for non-hierarchical modulation in different scenarios, where both labels and constellation positions are considered. However, no work aims to jointly optimize labels and signal positions for HM constellation and there is even no related design criterion.

In this paper, in order to jointly optimize the labels and positions of constellation for HM, we first derive the BICM with successive interference cancellation (SIC) capacity for HM in AWGN channel. Then based on BICM-SIC capacity, we formulate the goal (or say criterion) under which we carry out joint constellation label and position optimization. Furthermore, we propose to apply multi-start interior-point method to solve the non-convex HM optimization problem. To make the algorithm more efficient, the proposition that enlarging constellation for HM will not decrease the BICM-SIC capacity is derived. And to lower the complexity of optimization, simplification methods based on symmetric assumption are also proposed. To sum up, the main contributions of this paper are listed as follows:
\begin{itemize}
    \item Optimization criterion based on BICM-SIC capacity is proposed for designing HM to be used in AWGN channel.
    \item Multi-start algorithm based on interior-point method is investigated to solve HM optimization problem.
    \item Proposition that enlarging constellation for HM will not decrease the BICM-SIC capacity is presented to facilitate the choice of initial feasible solution, and simplification methods are proposed to reduce complexity of the optimization procedure.

\end{itemize}

\section{System model and Problem formulation}
For the sake of clarity, throughout this paper we use lowercase letters and uppercase letters to denote deterministic variables and random variables respectively. Normal and boldface letters are used to denote scalars and vectors respectively. And the term ``bit" is adopted to denote a binary digit and the term ``information bit" is used to denote binary information unit. $\parallel\cdot\parallel$ denotes Euclidian (or $\ell_2$) norm.
\subsection{System model}

The structure of HM is shown in Fig. \ref{scheme2}. The number of bit streams is $k$ and the $i$-th bit streams are denoted as $\mathbf{b}_{i}$. The $i$-th bit stream $\mathbf{b}_{i}$ is encoded by the encoder $\text{ENC}_{i}$ into bit stream $\mathbf{c}_{i}$. Then the bit cells from $k$ streams are combined into a new bit cell with $m={\Sigma_{i=1}^k m_i}$ bits and then the new bit cell is mapped by $\mu$ into an $n$-dimensional hierarchical symbol $\mathbf{z} \in \mathbb{X} \subseteq\mathbb{C}^n$. Note that different bit streams can be protected unequally if the mapping method is designed properly, and hence the coding rates of different bit streams should be chosen accordingly.

\begin{figure}[t]
\centering
\includegraphics[width=3.5in]{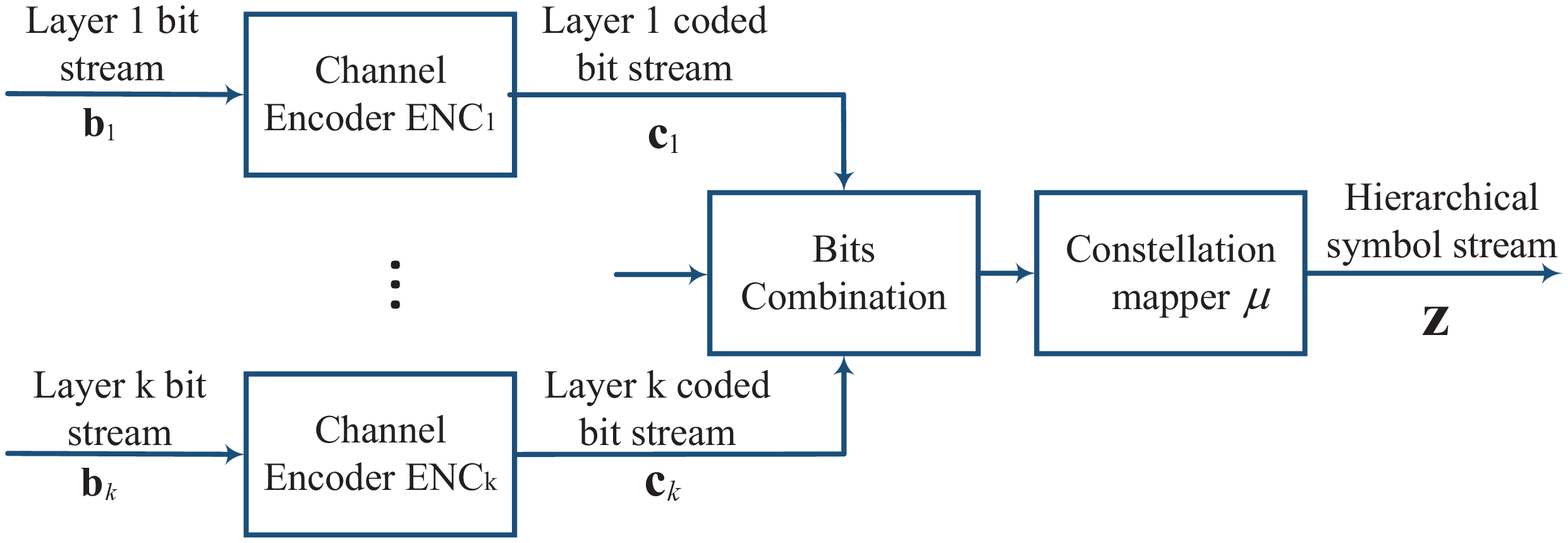}
\caption{Structure for hierarchical modulation.}
\label{scheme2}
\end{figure}

The hierarchical symbols $\mathbf{z}$ are transmitted through a vector channel characterized by transition probability density function (pdf.)
\begin{equation}
p(\mathbf{y}|\mathbf{z}):\mathbf{z},\mathbf{y}\in\mathbb{C}^n.
\end{equation}
If $n>1$, this channel model can represent MIMO channels, however in this paper we assume $n=1$ for the sake of clarity and it is easy to expand those methods in this paper to the cases where $n>1$.

\subsection{Problem formulation}

In this paper, we consider the scenario where there are two data streams to be sent (hence $k=2$ and these two bit streams are named high priority (HP) and low priority (LP) streams respectively) and two representative signal-to-noise ratios (SNR) of user side are considered. As shown in Fig. \ref{structure}, we assume $snr_L$ is larger than $snr_H$. Those users whose $snr$ is higher than $snr_L$ are supposed to receive both HP stream and LP stream and are denoted as L-user. Those users whose $snr$ is between $snr_L$ and $snr_H$ are supposed to receive only HP stream and denoted as H-user. It should be noted that the problem formulation and solution methods in this paper can be expanded to the cases where $k>2$.

\begin{figure}[t]
\centering
\includegraphics[width=2.2in]{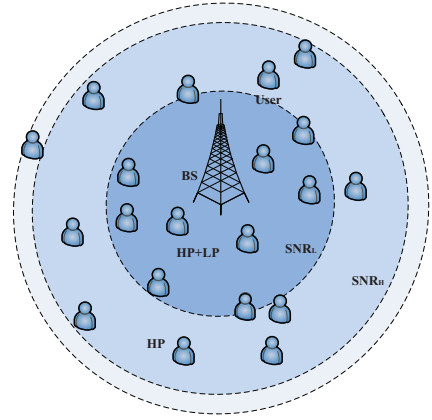}
\caption{The system structure considered in this paper.}
\label{structure}
\end{figure}

Throughout this paper, we consider additional white Gaussian noise (AWGN) channel, and the signal received by users with better $snr_H$ can be written as:
\begin{equation}
Y_L = Z+N_L, ~~~N_L\sim\mathcal{N}_c(0,\sigma_{L}^{2})
\end{equation}
and the signal received by users with worse $snr_H$ can be written as:
\begin{equation}
Y_H = Z+N_H, ~~~N_H\sim\mathcal{N}_c(0,\sigma_{H}^{2}).
\end{equation}
Here, $\sigma_{H}^{2}\geq\sigma_{L}^{2}$ and this channel can be viewed as degraded broadcasting channel \cite{IT}. And the transition pdf. of the AWGN channel can be written as
\begin{equation}
p(y|z)=\frac{1}{2\pi\sigma^2}e^{-\frac{\parallel y-z\parallel^2}{2\sigma^2}}
\end{equation}
Also, the methods in this paper can be applied to some fading channels, such as Rayleigh channel.

In some cases, better performance can be achieved by coded modulation (CM) \cite{trellis}. However, in most practical scenarios we use bit interleaved coded modulation (BICM) which is much simpler for real application and hence we take BICM capacity \cite{BICM} into consideration. Let $b_H^i$ denote the $i$-th bit of the binary label associated to HP stream and $b_L^j$ denote the $i$-th bit of the binary label associated to LP stream. For example, if $m_H=3$ and $m_L=2$, then we can use a five bits label $(b^1b^2b^3b^4b^5)$ to represent a hierarchical symbol. Without loss of generality, we assume the first 3 bits and last 2 bits represent HP stream and LP stream respectively, then a hierarchical symbol can be represented as $(b_H^1b_H^2b_H^3b_L^1b_L^2)$ or $(\mathbf{b}_H\mathbf{b}_L)$. The information bits which can be carried in HP stream can be calculated as,
\begin{align}\label{BICM_H}
r_{H}&=\sum\limits_{i=1}^{m_H}I(B_H^i;Y_H)
\end{align}
The information which can be carried in LP stream can be calculated as,
\begin{align}\label{BICM_L}
r_{L}&=\sum\limits_{j=1}^{m_L}I(B_L^j;Y_L|\mathbf{B}_H)
\end{align}
It is a reasonable assumption that every bit has a uniform input probability, then
\begin{align}
I(B_H^i;Y_H)=\!1+\!\frac{1}{2}\! \int_{-\infty}^{+\infty}\sum\limits_{b_H^i}p(y_H|b_H^i)\mathrm{log}_{2}\frac{p(y_H|b_H^i)}{\sum\limits_{\hat{b}_H^i}p(y_H|\hat{b}_H^i)}\mathrm{d}y_H
\end{align}
and $I(B_L^j;Y_L|\mathbf{B}_H)$ is given in equation (\ref{I_L}) where $\mathbb{X}_{\mathbf{b}_H,b_L^j}$ denotes the subset of hierarchical constellation symbols whose label has the value $b_L$ in position $j$ and the value $\mathbf{b}_H$ in positions corresponding to HP stream.
\begin{figure*}
\begin{align}\label{I_L}
I(B_L^j;Y_L|\mathbf{B}_H)=\sum\limits_{\mathbf{b}_H}p(\mathbf{b}_H)&I(B_L^j;Y_L|\mathbf{b}_H)= 1+ \frac{1}{2^{m_H}}\sum\limits_{\mathbf{b}_H} \int_{-\infty}^{+\infty}\sum\limits_{b_L^j}p(b_L^j,y_L|\mathbf{b}_H)\mathrm{log}_{2}p(b_L^j|y_L,\mathbf{b}_H)\mathrm{d}y_L \nonumber\\
&= 1+\frac{1}{2^{m_L+m_H}}\sum\limits_{\mathbf{b}_H}\int_{-\infty}^{+\infty}\sum\limits_{b_L^j}
\mathrm{log}_{2}\frac{\sum\limits_{z\in\mathbb{X}_{\mathbf{b}_H,b_L^j}}p(y_L|z)}{\sum\limits_{\hat{b}_L^j}\sum\limits_{z\in\mathbb{X}_{\mathbf{b}_H,\hat{b}_L^j}}p(y_L|z)}\sum\limits_{z\in\mathbb{X}_{\mathbf{b}_H,b_L^j}}p(y_L|z)\mathrm{d}y_L.
\end{align}
\hrulefill
\end{figure*}

According to actual requirements, under certain power constrains, we can formulate a criterion which maximizes the achievable rate that can be obtained by users who are able to receive both HP stream and LP stream, and guarantee the data rate $r_H$ of users with worse channel condition greater than a threshold $r^{*}$ in the meanwhile. That criterion can be formulated as the following optimization problem,
\begin{align}\label{max_p}
&\max_{z} ~~~~r_L\\
&s.t.~~\Bigg\{
\begin{array}{l}
r_H\geq r^{*};\\
\mathbb{E}[\parallel \!Z\!\parallel^2]\leq p.\nonumber
\end{array}
\end{align}
\begin{proposition}\label{ccc}
For optimization problem (\ref{max_p}), the constraint $\mathbb{E}[\parallel \!Z\!\parallel^2]\leq p$ is equivalent to $\mathbb{E}[\parallel \!Z\!\parallel^2]= p$ in AWGN channel.
\end{proposition}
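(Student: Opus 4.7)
The plan is to argue by a scaling contradiction: if some feasible point used strictly less than the full power budget, we could rescale the constellation to use exactly power $p$ without decreasing either $r_H$ or $r_L$, so the optimum must be attained at an equality.

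More concretely, I would proceed as follows. Suppose $\mathbb{X}^{\star}\subset\mathbb{C}$ achieves the maximum in (\ref{max_p}) with $\mathbb{E}[\|Z^{\star}\|^{2}]=p'<p$. Let $\alpha=\sqrt{p/p'}>1$ and define the scaled constellation $\mathbb{X}'=\alpha\,\mathbb{X}^{\star}$ together with the same labelling. Then $\mathbb{E}[\|Z'\|^{2}]=\alpha^{2}p'=p$, so $\mathbb{X}'$ satisfies the power constraint with equality. It remains to show that the BICM rates computed from $\mathbb{X}'$ are at least as large as those computed from $\mathbb{X}^{\star}$, which would imply $\mathbb{X}'$ is also feasible (the constraint $r_H\ge r^{\star}$ still holds) and yields an objective value at least as large.

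The central step is a monotonicity observation for mutual information under scaling in the AWGN channel. With the scaled constellation, the received signal is $Y=\alpha Z^{\star}+N$ where $N\sim\mathcal N_c(0,\sigma^{2})$. Since mutual information is invariant under the deterministic bijection $Y\mapsto Y/\alpha$, we have
\begin{equation}
I(B_H^{i};Y_H)=I\!\bigl(B_H^{i};Z^{\star}+N_H/\alpha\bigr),
\end{equation}
and the same manipulation applies to each conditional term in (\ref{I_L}). Thus scaling the constellation by $\alpha>1$ is equivalent, from the mutual-information standpoint, to keeping the original $\mathbb{X}^{\star}$ but reducing the noise variance from $\sigma^{2}$ to $\sigma^{2}/\alpha^{2}$. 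A standard data-processing/SNR-monotonicity argument (for instance, viewing the higher-noise channel as a cascade of the lower-noise channel with additional independent Gaussian noise of variance $\sigma^{2}(1-1/\alpha^{2})$) shows that every one of the per-bit mutual informations entering (\ref{BICM_H}) and (\ref{BICM_L}) is non-decreasing in SNR. Consequently $r_H$ and $r_L$ are both non-decreasing under the rescaling, completing the contradiction.

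The main subtlety is the monotonicity claim for the conditional BICM terms $I(B_L^{j};Y_L\mid\mathbf{B}_H)$ that appear in (\ref{I_L}), since BICM capacity is not the standard coded-modulation mutual information. This is where care is needed: I would handle it by noting that conditioned on each value of $\mathbf{b}_H$ the channel $B_L^{j}\to Y_L$ is still an AWGN channel whose input distribution is fixed and whose output is $\alpha Z^{\star}+N_L$, so the bijective rescaling $Y_L\mapsto Y_L/\alpha$ applies term by term inside the sum over $\mathbf{b}_H$, and then the cascade argument yields monotonicity. Averaging over $\mathbf{b}_H$ with its fixed uniform distribution preserves the inequality, so $r_L$ is non-decreasing, which closes the proof.
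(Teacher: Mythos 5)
Your proposal is correct and follows essentially the same route as the paper's Appendix A: scale a sub-power constellation up to full power, view the original (noisier) observation as a physically degraded version of the scaled one (cascade with independent Gaussian noise), and apply the data-processing inequality to each per-bit term, handling the LP terms conditioned on each $\mathbf{b}_H$ and then averaging. The only difference is cosmetic — you phrase the scaling as an equivalent noise reduction via the bijection $Y\mapsto Y/\alpha$, while the paper constructs the degraded channel directly on the amplified signal.
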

\begin{proof}
See Appendix A.
\end{proof}

Proposition \ref{ccc} shows that inequality power constrain is equivalent to equality power constrain. However, inequality power constrain will make it more convenient for us to apply the optimization algorithm. To be specific, inequality power constrain will give us more freedom to choose an initial feasible solution. And this property makes it much easier for us to choose multiple initial feasible solutions to achieve a satisfied optimized constellation (the reason why we choose multiple initial solutions will be detailed in the next subsection).

Furthermore, for many actual situations, nonlinear power amplifiers have to be used, and hence peak-to-power ratio (PAPR) is an important factor that we should take into consideration. If the maximum PAPR is set to $\xi$, then the PAPR constrain can be defined as
\begin{align}\label{PAPR}
\mathrm{PAPR} = \frac{\max_{z\in\mathbb{X}}{\parallel \!z\!\parallel^2}}{\mathbb{E}[\parallel \!Z\!\parallel^2]}\leq\xi
\end{align}
It can be proved in a similar way that the proposition 1 also holds when PAPR constrain exists.

Because the optimized hierarchical constellations are very likely to be nonuniform and hence are costly for detection, we also optimize the scaling factor pair $(d_1, d_2)$ of H-QAM \cite{Hier-ID} in a similar way. Optimized H-QAM can be decomposed into two parts, in-phase and quadrature phase, greatly reducing the detection complexity. It should be emphasized that this kind of H-QAM is suitable for those HMs whose $m_H=2$.

\subsection{Algorithm for constellation optimization}
A $2^m$-order hierarchical constellation can be represented as $\mathbf{z}=(z_1,z_2,...,z_{2^m})^T\in\mathbb{C}^{2^m}$, and the problem comes down to optimizing the value and label of every element of $\mathbf{z}$ to maximize $r_L$ under some constrains. We can devide $2^m$ elements into $2^{m_H}$ groups with equal number of elements (i.e., each group has $2^{m_L}$ elements). When BICM capacity is considered, every symbol should be labeled. And without loss of generality, we label every element according to natural order. That is to say, we let binary vector $\mathbf{b}_H$ labels the $l(\mathbf{b}_H)$-th group and binary vector $\mathbf{b}_L$ labels the $l(\mathbf{b}_L)$-th element of each group, where $l(\mathbf{b})$ means the decimal representation of $\mathbf{b}$. For example, if $m_L=3$ and $m_H=2$, the grouping method is shown in Fig. \ref{constel_grouping2}. During the optimization process, the label of every symbol does not change and we just need to change the values of symbols to realize our goals. After the aforementioned arrangement, we can use (\ref{BICM_L}) to calculate BICM capacity for any given $\mathbf{z}$.

In order to apply optimization algorithm, $\mathbf{z}$ is reformulated as a real-valued vector $\mathbf{z}_r=(real(z_1),imag(z_1),...,real(z_m),imag(z_m))^T$.
\begin{figure}[t]
\centering
\includegraphics[width=3.0in]{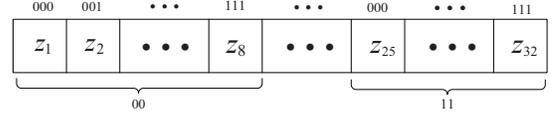}
\caption{Constellation grouping for joint label and position optimization, where $m_L=3$ and $m_H=2$. And $z_8$ is labeled by (00111) and $z_{25}$ is labeled by (11000).}
\label{constel_grouping2}
\end{figure}
For the sake of clear description of the algorithm we adopt, probelm (\ref{max_p}) is rewritten in the following form,
\begin{align}\label{mini_1}
&\min_{\mathbf{z}_r} ~~~~f_0(\mathbf{z}_r)\\
&s.t.~~~~~~f_i(\mathbf{z}_r)\leq 0,~~~i=1,\cdots,k. \nonumber
\end{align}
If there are no PAPR constrain, $k=2$, otherwise, $k=3$. That is a non-convex problem and can be solved by stochastic optimization algorithms, such as genetic algorithm and simulated annealing algorithm. However, stochastic algorithms usually cannot guarantee a global optimal solution and have high time complexity. In order to obtain satisfied results within an acceptable complexity, we use deterministic optimization algorithm combined with random multiple start points. For a deterministic optimization algorithm, every feasible start points is related to a local optimal solution, and hence the more different start points are chosen the more likely we can achieve a global optimal solution.

For every start point, we need to solve (\ref{mini_1}) to get a local optimal solution, which can be approximately achieved by solving a series of modified Karush-Kuhn-Tucker (KKT) condition. The KKT condition can be iteratively solved by primal-dual interior-point methods \cite{convex} and in each iteration the step ($\Delta\mathbf{z}_r,\Delta\boldsymbol{\lambda}$) can be got by solving equation (\ref{kkt_2}),
\begin{figure*}
\begin{align}\label{kkt_2}
\underbrace{\Bigg[
\begin{array}{l}
\nabla^2 f_0(\mathbf{z}_r)+\sum_{i=1}^m\lambda_i\nabla^2 f_i(\mathbf{z}_r)~~~D\boldsymbol{f}(\mathbf{z}_r)^T\\
-\text{diag}(\boldsymbol{\lambda})D\boldsymbol{f}(\mathbf{z}_r)~~~~~~~~~~~-\text{diag}(\boldsymbol{f}(\mathbf{z}_r))
\end{array}\Bigg]}_{\mathbf{m}}
\Bigg[
\begin{array}{l}
\Delta\mathbf{z}_r\\
\Delta\boldsymbol{\lambda}
\end{array}\Bigg]=
\Bigg[
\begin{array}{l}
\mathbf{r}_{pri}\\
\mathbf{r}_{dual}
\end{array}\Bigg].
\end{align}
\hrulefill
\end{figure*}
where $\boldsymbol{\lambda}=(\lambda_1,\cdots,\lambda_k)^T$, $\boldsymbol{f}(\mathbf{z}_r)=(f_1(\mathbf{z}_r),\cdots,f_k(\mathbf{z}_r))^T$, $\mathbf{r}_{pri} = \nabla f_0(\mathbf{z}_r)+\sum_{i=1}^k\lambda_i\nabla f_i(\mathbf{z}_r)$ and $\mathbf{r}_{dual}=-\text{diag}(\boldsymbol{\lambda})\boldsymbol{f}(\mathbf{z}_r)-(1/\mu)\mathbf{1}$. In some iterations, the leftmost matrix $\mathbf{m}$ in (\ref{kkt_2}) may be a singular matrix, and if this happens we can use pseudo inversion or turn to a more costly trust-region-based interior-point method \cite{trust_region} to update step.
\begin{figure}[t]
\centering
\includegraphics[width=1.6in]{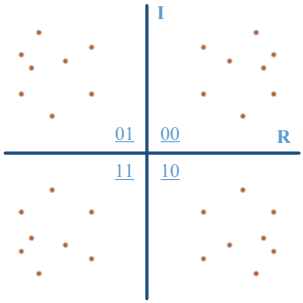}
\caption{Constellation under central symmetric assumption when $m_H=2$ and $m_L=3$. }
\label{symmetric}
\end{figure}

\textbf{Method to reduce complexity}: For large constellations, it is reasonable to assume that the whole constellation has some kind of symmetry property and we can take advantage of the symmetry to lower the number of variables we need to optimize. For example, if $m_H = 2$ and $m_L=3$, there are $4$ groups and we can assume that the groups are central symmetric as shown in Fig. \ref{symmetric}. We can reduce the number of complex variables from 32 to 8 and just optimize the value of those symbols which are in the first group whose $\mathbf{b}_H=00$. When the symbols in first cluster $\mathbf{z}_{c1}$ are determined, the symbols of the whole constellation is $(\mathbf{z}_{c1}^T, -\mathrm{conj}(\mathbf{z}_{c1})^T, \mathrm{conj}(\mathbf{z}_{c1})^T, -\mathbf{z}_{c1}^T)^T$ where $\mathrm{conj}(\cdot)$ is conjugate operation. Under symmetry assumption, we can greatly reduce optimization complexity.

To optimize H-QAM, we first fix the constellation labels. Gray labelling may result in the best BICM capacity \cite{BICM}, and hence we try to approach Gray labelling as much as possible when pre-fixing the labels (we use the term "approach" is because we will apply quasi Gray labelling to the cross shape 32-QAM). After labelling, the constellation $\mathbf{z}$ is a function of parameter pair $(d_1, d_2)$, and in effect optimizing $\mathbf{z}$ is to optimize the parameter pair where the algorithm aforementioned is also applicable.

\section{Numerical results}
To carry out simulation, we should stipulate the $S\!N\!R$ distribution. In order to take into consideration both path-loss and shadowing, the following model \cite{channel_model} is adopted in this paper,
\begin{align}\label{Ldb}
L_{dB}(r) = 130.19+37.6\mathrm{log}_{10}r+X_{\sigma},
\end{align}
where $r$ is the distance between the user and the signal source (e.g., the BS) in kilometers and $X_{\sigma}$ is a Guassian random variable reflecting a log-normal shadowing effect which has zero means and standard deviation $\sigma=8\mathrm{dB}$. $L_{dB}$ denotes the fading factor and if the transmit signal power is $p_s$ dBm and noise power is $p_n$ dBm, then the $S\!N\!R$ of a user at distant $r$ from the signal source can be computed as
\begin{align}\label{Ldb}
S\!N\!R_{dB} = p_s-p_n-L_{dB}(r).
\end{align}
Under a reasonable assumption that users are uniformly distributed within a circular with radius $\hat{r}$ kilometers, the pdf. of a user located at distance $r$ kilometers from the center can be computed as
\begin{align}\label{r_dis}
p(r) = \frac{2r}{{\hat{r}}^2}.
\end{align}
Combining equation (\ref{Ldb}) and (\ref{r_dis}) we can get the cumulative distribution for user's $S\!N\!R$.  Assuming that $p_s=66$ dBm, $p_n=-95$ dBm and $\hat{r}=4$ kilometers, we can get the lowest $snr$ covering different percentage of users in probability as shown in table \ref{snr}. In the following, we denote an $snr$ pair as ($snr_H,snr_L$).

\begin{table}
\caption{the lowest SNR value for different percentage of user coverage.}
\vspace{-0.4cm}
\label{snr}
\begin{center}
\begin{tabular}{|c|*{7}{c|}}
\hline
 \multicolumn{1}{|c|}{\textbf{Coverage}} &\multicolumn{1}{|c|}{\textbf{95\%}} &\multicolumn{1}{|c|}{\textbf{90\%}}&\multicolumn{1}{|c|}{\textbf{85\%}}&\multicolumn{1}{|c|}{\textbf{70\%}}&\multicolumn{1}{|c|}{\textbf{60\%}}&\multicolumn{1}{|c|}{\textbf{50\%}} &\multicolumn{1}{|c|}{\textbf{40\%}} \\
\hline
\multirow{1}{*}{\textbf{SNR}(dB)}\!\!&$-0.32$&2.92&5.20&10.05&12.71&15.29&18.00\\
\hline
\end{tabular}
\end{center}
\end{table}
In the simulation, we consider two scenarios: in the first scenario, HP data covers $90\%$ users and LP data covers $70\%$ users, and in the second scenario, HP data covers $90\%$ users and LP data covers $50\%$ users. For the first scenario, we choose $m_H=m_L=2$ and for the second scenario we choose $m_H=2, m_L=3$.
\begin{figure}[t]
\centering
\includegraphics[width=3.5in]{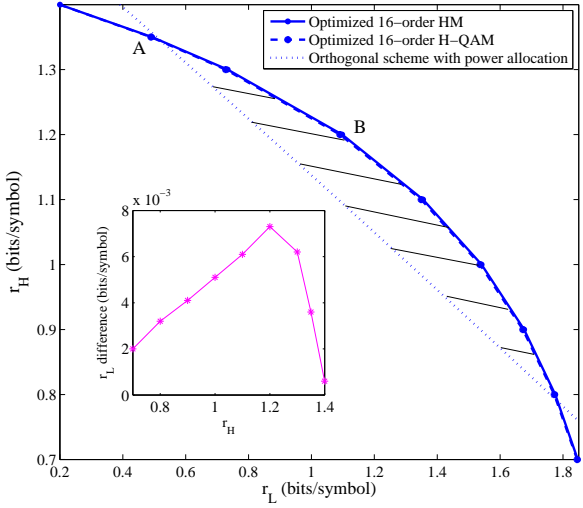}
\caption{Achievable rate regions for ($snr_H,snr_L$) = (2.92 dB,10.05 dB).}
\label{region1}
\end{figure}

\begin{figure}[t]
\centering
\includegraphics[width=3.5in]{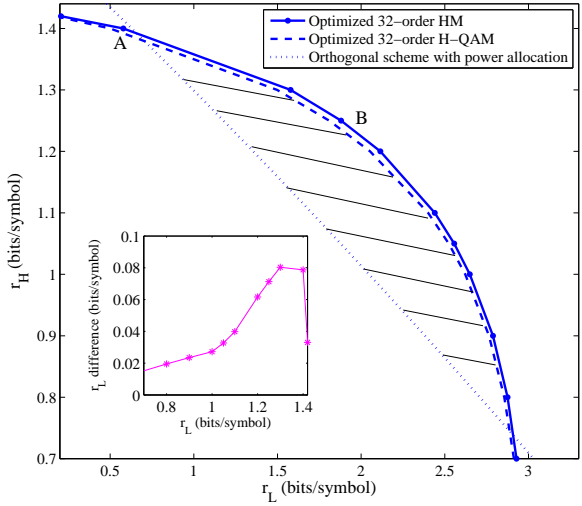}
\caption{Achievable rate regions for ($snr_H,snr_L$) = (2.92 dB, 15.29 dB).}
\label{region2}
\end{figure}

Firstly we compute the rate region, which can be achieved by our optimized hierarchical constellations, by choosing different $r^*$. As a baseline, orthogonal transmission scheme (such as TD and FD) with power allocation \cite{broadcast}\cite{wireless} is compared with hierarchical scheme, where we assume ideal Gaussian signals are adopted by orthogonal scheme.

The achievable regions are shown in Fig. \ref{region1} and Fig. \ref{region2}. It is illustrated that $r_H$ decreases as $r_L$ increases, which is easy to understand because LP layer could be viewed as a kind of interference for HP layer \cite{upgrade}. We can also find that some region can only be achieved by hierarchical modulation. In another word, the shadow region cannot be achieved by orthogonal scheme, even the signal in orthogonal scheme is ideal Gaussian. And we can use finite order HM to achieve the region which cannot be achieved by orthogonal scheme. It should be noted that if we want to get a larger achievable region, we can combine orthogonal scheme and hierarchical scheme \cite{Adaptive}, and the region of the combining scheme is the convex hull of the region achieved by orthogonal scheme and hierarchical scheme.

\begin{figure}[t]
\centering
\includegraphics[width=3.5in]{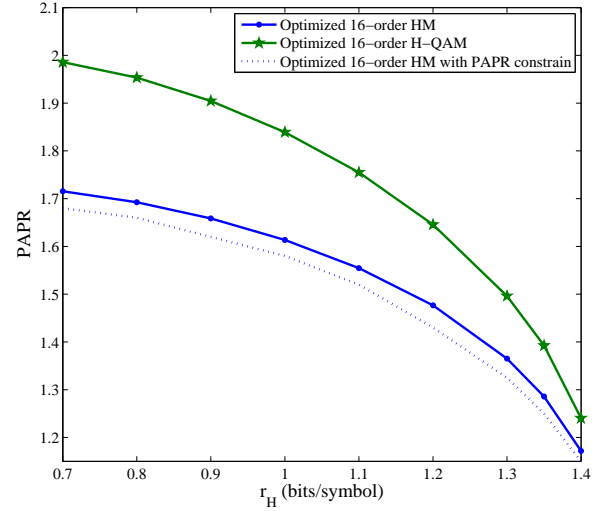}
\caption{PAPR of the two kinds of optimized 16-order hierarchical constellation for ($snr_H,snr_L$) = (2.92 dB, 10.05 dB).}
\label{PAPR1}
\end{figure}

\begin{figure}[t]
\centering
\includegraphics[width=3.7in]{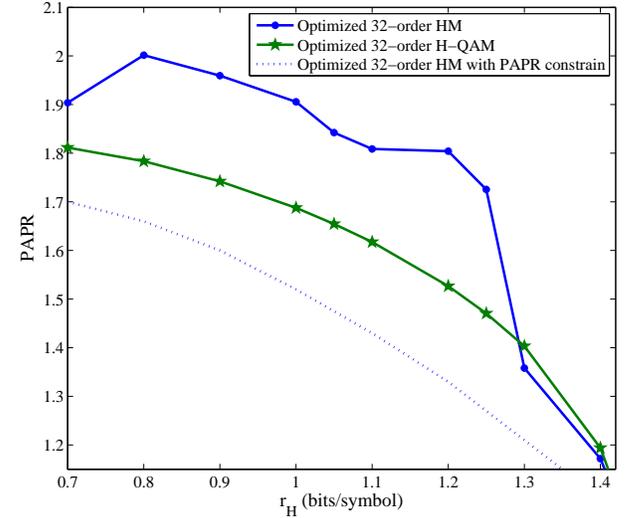}
\caption{PAPR of the two kinds of optimized 16-order hierarchical constellation for ($snr_H,snr_L$) = (2.92 dB, 15.29 dB).}
\label{PAPR2}
\end{figure}

As the magenta lines illustrate in Fig. \ref{region1} and Fig. \ref{region2}, the achievable region of optimized H-QAM are very close to the the achievable region of optimized HM with the same order. That is to say, if only taking into consideration the achievable rate region, the advantage over optimized H-QAM by using optimized HM is marginal. However, each of the two kinds of hierarchical constellations has its own advantages. The optimized H-QAM are able to be detected with lower complexity, because its in-phase and quadrature-phase parts can be viewed as two independent PAM. The optimized HM are able to possess lower PAPR, which is quite important when non-liner amplifier is adopted. The optimized 16-order HM is with lower PAPR than 16-order H-QAM is as illustrated in Fig. \ref{PAPR1}. For higher order, the PAPR of optimized HM may be larger than that of H-QAM. However we could sacrifice some BICM capacity by adding PAPR constrain to design a HM constellation with satisfied PAPR. In Fig. \ref{PAPR1} and Fig. \ref{PAPR2}, the dash lines denote the PAPR we can reach by optimized HM whose BICM capacity is equal to the BICM capacity of optimized H-QAM. And it is not hard to draw the conclusion that we can restrict the PAPR to be the same as that of optimized H-QAM, and get an optimized HM which possesses better BICM capacity. Furthermore, we can notice that when $r_H$ becomes large enough, some constellation symbols of optimized HM are superimposed and hence the PAPR of the optimized HM will be smaller than that of optimized H-QAM.
\begin{figure}
  \centering
  \subfigure[Optimized constellation for point A]{
    \label{fig:subfig:16a} 
    \includegraphics[width=2.5in]{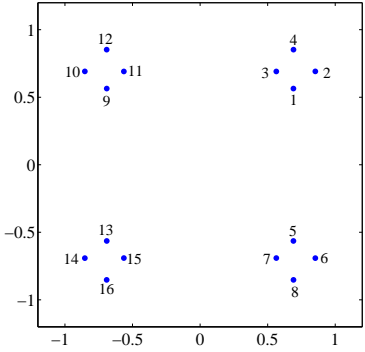}}
  \hspace{-0.5in}
  \subfigure[Optimized constellation for point B]{
    \label{fig:subfig:16b} 
    \includegraphics[width=2.4in]{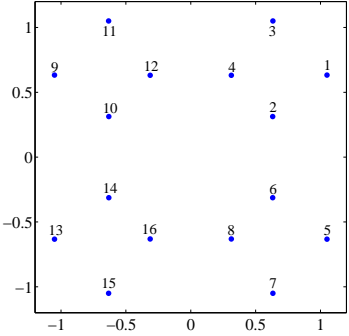}}
  \caption{Optimized 16-order hierarchical constellations for point A and B in Fig. \ref{region1}}
  \label{constellaion_10_05_2_92} 
\end{figure}

\begin{figure}
  \centering
  \subfigure[Optimized constellation for point A]{
    \label{fig:subfig:32a} 
    \includegraphics[width=2.5in]{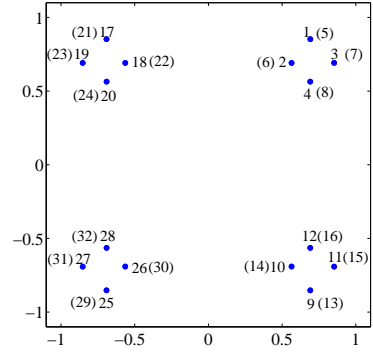}}
  \hspace{-0.5in}
  \subfigure[Optimized constellation for point B]{
    \label{fig:subfig:32b} 
    \includegraphics[width=2.5in]{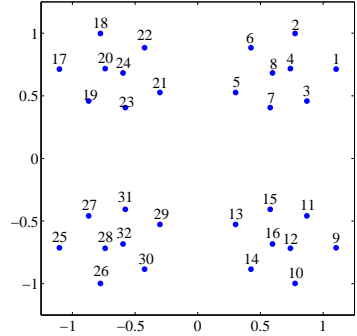}}
  \caption{Optimized 32-order hierarchical constellations for point A and B in Fig. \ref{region2}}
  \label{constellaion_10_05_2_92_2} 
\end{figure}

The optimized rate pairs can be found at the edge of achievable region, and we show some of the optimized HM which can achieve certain point of the edge. Point A in Fig. \ref{region1} and Fig. \ref{region2} is where the rate region edges of HM and orthogonal scheme approximately intersect. The point B in Fig. \ref{region1} is the point where $r_H\approx r_L$ and the point B in Fig. \ref{region2} is the point where $3r_H\approx 2r_L$. All the points are shown in Fig. \ref{constellaion_10_05_2_92} and Fig. \ref{constellaion_10_05_2_92_2}. As expected, the constellation symbols for all optimized HM are divided into 4 cluster which denotes $m_H=2$. And when $r_H$ becomes large enough, optimized 32-order HM are superimposed into 16 symbols constellation as shown in Fig. \ref{fig:subfig:32a}.

\section{conclusion}
In this paper, we address the problem of constellation optimization for HM to be used in AWGN channel. First, we propose an optimization criterion (or say an optimization goal) based on BICM-SIC capacity. Then multi-start interior point algorithm is used to approximately solve the non-convex optimization problem and some methods to reduce optimization complexity are also presented. Through numerical results, we can see that the performance of optimized HM is able to outperform that of optimized H-QAM, in terms of both rate region and PAPR.

\begin{appendix}

\subsection{Proof of Proposition \ref{ccc}}
 We first prove that simply amplifying the hierarchical constellation $\mathbf{z}$ to $\rho \mathbf{z}$ will not decrease achievable rate, where $\rho>1$ and $\rho \mathbf{z}$ means every element of $\mathbf{z}$ is multiplied by $\rho$. In order to prove that, we use an information processing model as shown in Fig. (\ref{proof}). Therein $N_1\sim\mathcal{N}_c(0,\sigma_{L}^{2})$, $N_2\sim\mathcal{N}_c(0,\sigma_{H}^{2}-\sigma_{L}^{2})$, $\hat{N}_1\sim\mathcal{N}_c(0,\sigma_{L}^{2}-\frac{\sigma_{L}^{2}}{\rho})$ and  $\hat{N}_2\sim\mathcal{N}_c(0,\sigma_{H}^{2}-\frac{\sigma_{H}^{2}}{\rho})$. $Y_H$ and $Y_L$ is the signal the receiver can get when sending constellation $\mathbf{z}$, and $\hat{Y}_H$ and $\hat{Y}_L$ is the signal the receiver can get when sending constellation $\rho \mathbf{z}$.

It is easy to know that $(\mathbf{B}_L,\mathbf{B}_H)$, $Z$, $\hat{Y}_H$ and $Y_H$ form a Markov chain $(\mathbf{B}_L,\mathbf{B}_H)\rightarrow Z\rightarrow \hat{Y}_H \rightarrow Y_H$, and hence $(\mathbf{B}_L,\mathbf{B}_H)$, $\hat{Y}_H$ and $Y_H$ also form a Markov chain $(\mathbf{B}_L,\mathbf{B}_H)\rightarrow \hat{Y}_H \rightarrow Y_H$. We can get that $B^i_H$, $\hat{Y}_H$ and $Y_H$ also form a Markov chain $B^i_H\rightarrow \hat{Y}_H \rightarrow Y_H$ which is shown as following,
\begin{align}\label{Markov}
p(b^i_H,\hat{y}_H,y_H) =p(b^i_H)p(\hat{y}_H|b^i_H)p(y_H|\hat{y}_H)\nonumber
\end{align}
The second equation in (\ref{Markov}) is based on the aforementioned fact that $(\mathbf{B}_L,\mathbf{B}_H)$, $\hat{Y}_H$ and $Y_H$ form a Markov chain. According to data-processing inequality \cite{IT}, we have following inequality,
\begin{align}
I(B^{i}_H;\hat{Y}_H)\geq I(B^{i}_H;Y_H)
\end{align}
When given $\mathbf{B}_H=\mathbf{b}_H$, we can also derive that $B^j_L$, $\hat{Y}_L$ and $Y_L$ form a Markov chain,
\begin{align}\label{Markov2}
p(b^j_L,\hat{y}_L,y_L|\mathbf{b}_H) = p(b^j_L|\mathbf{b}_H)p(\hat{y}_L|b^j_L,\mathbf{b}_H)p(y_L|\hat{y}_L)
\end{align}
\begin{figure}[t]
\centering
\includegraphics[width=3.5in]{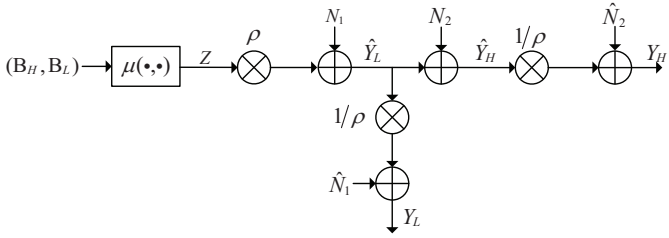}
\caption{Signal processing model}
\label{proof}
\end{figure}
The second equation in (\ref{Markov2}) is following that $(\mathbf{B}_L,\mathbf{B}_H)$, $\hat{Y}_L$ and $Y_L$ form a Markov chain $(\mathbf{B}_L,\mathbf{B}_H)\rightarrow \hat{Y}_L \rightarrow Y_L$. Based on equation (\ref{Markov2}),we have
\begin{align}
I(B^j_L;\hat{Y}_L|\mathbf{b}_H)\geq I(B^j_L;Y_L|\mathbf{b}_H)
\end{align}
For $I(\mathbf{B}^j_L;\hat{Y}_L|\mathbf{B}_H)$ is mean value of $I(\mathbf{B}^j_L;\hat{Y}_L|\mathbf{b}_H)$ with respect to $z_H$, we have
\begin{align}
I(\mathbf{B}^j_L;\hat{Y}_L|\mathbf{B}_H)&=\mathbb{E}_{\mathbf{B}_H}[I(\mathbf{B}^j_L;\hat{Y}_L|\mathbf{b}_H)]\nonumber \\
&\geq \mathbb{E}_{\mathbf{B}_H}[I(\mathbf{B}^j_L;Y_L|\mathbf{b}_H)]\nonumber \\
&= I(\mathbf{B}^j_L;Y_L|\mathbf{B}_H)
\end{align}
We can conclude here, simply amplifying a constellation will not decrease $r_H$ and $r_L$. This implies that when increasing the power of a constellation $r_H$ and $r_L$ will not decrease, as following,
\begin{align}\label{rh}
r_H(\mathbf{\hat{z}})\geq r_H(\rho \mathbf{z}) \geq r_H(\mathbf{z})
\end{align}
where $r_H(\mathbf{v})$ means the $r_H$ value when sending constellation $\mathbf{v}$, and $\hat{z}$ is the constellation that can achieve the best $r_H$ value and has the same power of constellation $\rho \mathbf{z}$. In the same way, we can get
\begin{align}\label{rl}
r_L(\mathbf{\hat{z}})\geq r_L(\rho \mathbf{z}) \geq r_L(\mathbf{z})
\end{align}

Then we assume constellation $\mathbf{z}$ satisfies $\mathbb{E}[\parallel \!Z\!\parallel^2]< p$ and $r_H\geq r^{*}$ and maximize $r_L$. Following equation (\ref{rh}) and (\ref{rl}) we can always find a constellation $\mathbf{\hat{z}}$ which satisfies $\mathbb{E}[\parallel \!\hat{Z}\!\parallel^2]= p$ and can achieve both  $r_L(\mathbf{\hat{z}}) \geq r_L(\mathbf{z})$ and constrain $r_H(\mathbf{\hat{z}})>r^{*}$.

Based on the deduction above, under both constrains $\mathbb{E}[\parallel \!Z\!\parallel^2]\leq p$ and $\mathbb{E}[\parallel \!Z\!\parallel^2]= p$, we can achieve the same maximization and proof is completed.

\end{appendix}
\bibliographystyle{IEEEtran}
\bibliography{paper}

\end{document}